\documentclass[12pt]{article}
\usepackage[colorlinks,linkcolor=Cobalt,citecolor=Cobalt,urlcolor=Cobalt,bookmarks,bookmarksnumbered]{hyperref}
\usepackage{mathpazo,charter}
\usepackage[scaled=0.95]{helvet}
\usepackage{amsmath,XXXF}
\usepackage{graphicx,color}
\usepackage{booktabs,multirow}

\newcommand{\Cl}{\mathop{\hbox{\rm Cl}}\nolimits}
\def\vdt{\partial_\tau}

\def\cM{\mathcal{M}}
\def\Gr#1{{\color{grey3}#1}}
\def\cb#1#2{\setlength\fboxsep{1pt}\colorbox{#1}{\color{#1}\fbox{\color{black}#2}}}
\def\cB#1{\hbox to0pt{\setlength\fboxsep{0pt}\hss\color{grey3}\fbox{\cb{white}{#1}}\hss}}
\def\bB#1{\hbox to0pt{\setlength\fboxsep{0pt}\hss\color{grey3}\fbox{\cb{black}{\color{white}#1}}\hss}}

%:Options
 \SfTitles
 \allowdisplaybreaks
\begin{document}

\thispagestyle{empty}
 \noindent
 \today
  \vspace*{5mm}
 \begin{center}
{\LARGE\sf\bfseries
        On General Off-Shell Representations\protect\\*[4pt]
        of Worldline (1D) Supersymmetry}\\*[5mm]
{\large  Charles F. Doran,$^{\!\!a}$~
         Tristan H\"{u}bsch,$^{\!\!b}$~
         Kevin M.~Iga$^c$~
         and~
         Gregory D.~Landweber$^d$}\\[3mm]
{\small\it
  $^a$\,Dept. of Mathematics, University of Alberta, Edmonton, Alberta;
    \texttt{\slshape doran@math.ualberta.ca}\\*[0pt]
  $^b$\,Department of Physics \&\ Astronomy, Howard University, Washington, DC 20059;
    \texttt{\slshape thubsch@howard.edu}\\*[0pt]
  $^c$\,Natural Science Division, Pepperdine University, Malibu, CA 90263;
    \texttt{\slshape kiga@pepperdine.edu}\\*[0pt]
  $^d$\,Mathematics Program, Bard College, Annandale-on-Hudson, NY 12504-5000;
    \texttt{\slshape gregland@bard.edu}
 }\\[0mm]
  \vspace*{5mm}
{\sf\bfseries ABSTRACT}\\[2mm]
\parbox{158mm}{Every finite-dimensional unitary representation of the $N$-extended worldline supersymmetry without central charges may be obtained by a sequence of differential transformations from a direct sum of {\em\/minimal Adinkras\/}, simple supermultiplets that are identifiable with representations of the Clifford algebra. The data specifying this procedure is a sequence of subspaces of the direct sum of Adinkras, which then opens an avenue for classification of the continuum of so constructed off-shell supermultiplets.
 } 
\end{center}
\vspace{5mm}
\noindent
\parbox[t]{60mm}{PACS: {\tt11.30.Pb}, {\tt12.60.Jv}}\hfill
\parbox[t]{100mm}{\raggedleft\small\baselineskip=12pt\sl
             Once you begin rehearsal, then it's small building blocks.\\
             It's solving little problems one at a time.\\[-0pt]
            |~Michael Emerson}
\vspace{5mm}

\section{Introduction}
The study of off-shell supermultiplets in one dimension, \ie, finite-dimensional unitary off-shell representations of $N$-extended worldline supersymmetry,
 as  originally started in Refs.\cite{rGR0,rGR1,rGR2,rPT,rGLPR,rGLP},
 has been reinvigorated in the past decade or so\cite{rBIKL03,rBIKL04,rA,rBKMO,rKRT,r6-1,r6--1,rDI-SQM06,rDI-SQM06.2,rKT07,rILS,rDI-SQM07,rDI-SQM07.2,r6-3,r6-3.2,r6-1.2,rUMD09-1,rDI-SQM11,r6-3.1,rUMD09-2,rTHGK12,rGHHS-CLS,rGIKT12,rUMD12-3,rTHGK13}. In particular, an unprecedented abundance of new supermultiplets has been discovered\cite{r6-1.2,r6-3.1}, largely due to a graphical description of such supermultiplets using so-called Adinkras\cite{rA,r6-1}.
 Yet, in addition to all of these new supermultiplets, several off-shell supermultiplets have been analyzed recently\cite{rTHGK12,rGHHS-CLS,rGIKT12}, which cannot be depicted by Adinkras but represent various generalizations thereof; in fact, there are infinitely many such, more general supermultiplets\cite{rTHGK12}.

Herein, we prove that all off-shell worldline supermultiplets subject to physics-standard conditions can be analyzed in terms of Adinkras, as well as synthesized from them in a finite number of operations called {\em\/lowering\/}; see Theorem~\ref{T:main} in Section~\ref{s:Main}.

We then illustrate this procedure on two examples from the recent literature:
 ({\small\bf1})~the $N\,{=}\,3$ gauged worldline off-shell supermultiplet of Refs.\cite{rTHGK12,rTHGK13}, and
 ({\small\bf2})~the worldline (1d) shadow\cite{rGHHS-CLS} of the 4d ${\cal N}\,{=}\,1$ complex linear superfield\cite{r1001}.
Finally, we discuss a simple $N\,{=}\,5$ {\em\/continuous\/} family of worldline off-shell supermultiplets to see how, in principle, a general classification program of worldline ($1d$) off-shell supermultiplets could proceed.

\section{Supermultiplets}
We consider linear off-shell representations of the $N$-extended supersymmetry algebra in one dimension without central charge:
\begin{equation}
 \{Q_I,Q_J\}=2i\,\d_{IJ}\,\vdt
  \quad\text{and}\quad
 [\vdt,Q_I]=0\qquad
 \text{for}\quad I,J=1,{\cdots},N.
\label{e:SuSy}
\end{equation}
To be specific, linear representations of this algebra are spanned by a finite number of bosonic ($\f_a$) and fermionic ($\j_\a$) component fields (functions of time $\t$), upon which the operators $Q_I$ and $\vdt$ act linearly and so that the relations\eq{e:SuSy} are always satisfied.
 If the relations\eq{e:SuSy} are satisfied identically on a supermultiplet without requiring any of the component fields to satisfy any time-differential equation, the representation is said to be off-shell.

Any particular choice of component fields $(\f_a|\j_\a)$ used to represent a supermultiplet should be considered a basis of the supermultiplet, and is subject to redefinitions.
 Local linear changes of variables $\F_a=\F_a(\f,\vdt\f,{\cdots})$ and $\J_\a=\J_\a(\j,\vdt\j,{\cdots})$ produce equivalent representations of the same supermultiplet.
 Since the operators $Q_I$ are themselves fermionic, their application on any one component field must produce a linear combination of the fields of the opposite statistics and their $\t$-derivatives.
 It follows that a supermultiplet $\cM$ is a {\em\/closed $Q$-orbit\/}, in that the result of the application of any sequence of $Q_I$-operators on any component field of $\cM$ must result in a linear combination of the component fields in $\cM$ and their $\t$-derivatives.

We next define a few additional useful properties of off-shell supermultiplets, assuming only the supersymmetry algebra\eq{e:SuSy}. In particular, we make no assumption about any Lagrangian or intended dynamics for the supermultiplets. 

\subsection{Engineerable Supermultiplets}
As done in\eq{e:SuSy}, we adopt the ``natural'' system of units, fixing $c$ and $\hbar$ as two basic units, which we then never write explicitly\cite{rBjDr2,rPS-QFT}. The physical units of each quantity then reduce to a power of a single unit, which we choose to be mass.
 For a quantity $X$ with units $(\text{mass})^{\sss[X]}$, the exponent $[X]$ is called the {\em\/engineering dimension\/} of the quantity $X$.
 For an operator $\mathscr{O}$, we say that $[\mathscr{O}]=y$ if $[\mathscr{O}(X)]=y+[X]$, for all $X$ upon which the operator $\mathscr{O}$ is defined to act.

In particular, since time has engineering dimension $[\t]\,{=}\,{-}1$, $[\vdt]=1$. If the $Q_I$ in\eq{e:SuSy} are to have unambiguous engineering dimensions\ft{Abstract Hermitian operators may be assigned the physical units of the physical observable to which they correspond. This rule does not apply to fermionic operators, as they cannot correspond to any physical observable.}, the supersymmetry algebra\eq{e:SuSy} implies that $[Q_I]=\frac12$ for all $I=1,{\cdots},N$. 
 In the $Q$-transformation rules, the action of each $Q_I$ is specified on each component field. If $[F]=f$ for a component field $F$, $[Q_I(F)]=f{+}\frac12$, and $Q_I(F)$ is a linear combination of component fields of engineering dimension $f{+}\frac12$ as well as $\t$-derivatives of component fields of engineering dimension $f{-}\frac12$. Formalizing this, we have:
\begin{defn}
A supermultiplet is {\em\/engineerable\/} if it is possible to consistently assign engineering dimensions to all the component fields, such that $[Q_I(F)]=f{+}\frac12$ for every component field $F$ with $[F]=f$.
\end{defn}
All supermultiplets commonly used in the literature are engineerable. Nevertheless, this is a logical assumption that deserves to be spelled out since it will be used in the mathematical proof of our main theorem.
In turn, non-engineerable supermultiplets are logically possible if there is a fixed mass parameter. For instance, in Planck units (including Newton's gravitational constant $G_{\sss N}$ as an unwritten unit), non-engineerable supermultiplets are possible since the action of $Q_I$ on certain component fields may involve (unwritten) powers of the Planck mass.  As an aside, a non-engineerable ``Escheric'' supermultiplet was discussed in Ref.\cite{rA}.

\begin{prop}\label{P:B-F=1/2}
If an engineerable supermultiplet $\cM$ contains a boson $\f$ and a fermion $\j$ for which $[\j]-[\f]\neq\frc12\pmod\ZZ$, $\cal M$ must decompose as a direct sum of at least two supermultiplets, $\cM\supseteq\cM_1\oplus\cM_2$, such that $\f\in\cM_1$ and $\j\in\cM_2$.
\end{prop}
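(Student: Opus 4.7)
The plan is to refine the engineering-dimension grading into a $\ZZ$-grading that the supersymmetry action respects, and then peel $\cM$ apart along it. To this end, I define a \emph{shifted dimension} $\tilde d(F):=[F]$ for bosonic $F$ and $\tilde d(F):=[F]-\tfrac12$ for fermionic $F$. A short check shows that $Q_I$ changes $\tilde d$ by an integer: acting on a boson $F$ it produces a fermion of dimension $[F]+\tfrac12$, whose $\tilde d$ is $[F]=\tilde d(F)$; acting on a fermion $F$ it produces a boson of dimension $[F]+\tfrac12$, whose $\tilde d$ differs from $\tilde d(F)=[F]-\tfrac12$ by $1$. The derivative $\vdt$ shifts $\tilde d$ by $1$ as well. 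Thus the coset of $\tilde d$ modulo $\ZZ$ is preserved by the full $Q_I,\vdt$-action, and in particular by every word in the $Q_I$.

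Next, I invoke engineerability: by definition there is a choice of component-field basis of $\cM$ on which $[\,\cdot\,]$, and hence $\tilde d$, is well-defined. I partition this basis by the coset of $\tilde d$ modulo $\ZZ$, and for each coset $c$ that actually appears I let $\cM_c\subseteq\cM$ be the linear span of the basis fields with $\tilde d\equiv c\pmod\ZZ$. The previous paragraph guarantees $Q_I\cM_c\subseteq\cM_c$ and $\vdt\cM_c\subseteq\cM_c$, so each $\cM_c$ is a closed $Q$-orbit, i.e.\ itself a sub-supermultiplet. Since the basis is finite, this gives $\cM=\bigoplus_c\cM_c$ as a direct sum of supermultiplets.

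Finally, the hypothesis $[\j]-[\f]\neq\tfrac12\pmod\ZZ$ is equivalent to $\tilde d(\j)-\tilde d(\f)=[\j]-[\f]-\tfrac12\notin\ZZ$, so $\f$ and $\j$ occupy different cosets and hence distinct summands in the decomposition above. Setting $\cM_1$ to be the summand containing $\f$ and $\cM_2$ to be the direct sum of all the other $\cM_c$ yields $\cM=\cM_1\oplus\cM_2$ with $\f\in\cM_1$ and $\j\in\cM_2$, which is the required decomposition. I do not expect a hard step in the argument; the one point deserving care is the appeal to engineerability to guarantee the existence of a basis on which $\tilde d$ is actually defined, since without that one cannot speak of the $\tilde d$-coset of an abstract supermultiplet element at all — once that is in hand, the $\ZZ$-grading above forces the splitting.
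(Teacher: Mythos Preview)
Your proof is correct and is essentially the same argument as the paper's: both observe that the $Q_I$-action preserves the engineering-dimension coset modulo $\ZZ$ (with the $\tfrac12$ offset between bosons and fermions), and then split $\cM$ along these cosets. Your shifted dimension $\tilde d$ is a tidy repackaging of the paper's separate tracking of boson and fermion dimensions, and your global partition into $\bigoplus_c\cM_c$ is slightly more explicit about the direct-sum structure than the paper's orbit-by-orbit construction, but the underlying idea is identical.
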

\begin{proof}
Let $[\f]=m$ be the engineering dimension of $\f\in\cM$. Then, $Q_I(\f)$ has engineering dimension $m{+}\frc12$ and is a linear combination of some of the fermions and their $\t$-derivatives. Since $[\vdt]=1$, each fermion occurring in $Q_I(\f)$ has the engineering dimension $m{+}\frc12\pmod\ZZ$.
 In turn, applying $Q_J$ on each of these fermions must produce a linear combination of bosons and their $\t$-derivatives, where each of these resulting bosons has the engineering dimension $m\pmod\ZZ$.
 Iterating this argument eventually maps out a sub-supermultiplet $\cM_1\subset\cal M$, wherein all bosons and fermions have the engineering dimensions $m\pmod\ZZ$ and $m{+}\frc12\pmod\ZZ$, respectively. By construction, $[\text{fermion}]-[\text{boson}]=\frc12\pmod\ZZ$ throughout $\cM_1$.

Since $[\j]-[\f]\neq\frc12\pmod\ZZ$, $\j\not\in\cM_1$ and $\j$ must belong to a separate sub-supermultiplet of $\cM$. Repeating the above construction starting with $\j$ maps out this other $\cM_2\subset\cM$.
\end{proof}
In most physics applications, the relation\ft{When not using the ``$\hbar=1=c$'' units, this relationship is $[\text{fermion}]-[\text{boson}]=[Q]+n[\hbar\vdt]$, with $n\in\ZZ$.} $[\j_\a]-[\f_a]+\frc12=0\pmod\ZZ$ is enforced by dynamical considerations even for (component) fields that are completely unrelated, by supersymmetry or otherwise. However, no bosonic-fermionic pair of fields satisfying $[\j]-[\f]\neq\frc12\pmod\ZZ$ can belong to an {\em\/indecomposable\/} supermultiplet. 

\paragraph{Nomenclature:} A representation (and so also a supermultiplet) is said to be {\em\/indecomposable\/} if it cannot be decomposed into a direct sum of two or more sub-representations. A representation is called {\em\/irreducible\/} if it contains no proper sub-representation. The latter condition is stronger, in that indecomposable representations need not be irreducible, whereas all irreducible representations are necessarily indecomposable.

For example, the well-known real vector superfield $\IV=\IV^\dag$ furnishes a representation of simple supersymmetry in 4-dimensional spacetime.
 Given a chiral superfield $\bm\L$ and its conjugate $\bm\L^\dag$, one can gauge away the ``lower half'' sub-representation, {\em\/reducing\/} $\IV$ to the gauge quotient $\{\IV\simeq\IV{+}\Imm(\bm\L)\}$ known as the ``vector superfield in the Wess-Zumino gauge'' and represented by the ``upper half'' component fields\cite{r1001,rBK,rWS-Fields}. This {\em\/reduces\/} the vector supermultiplet to the half-as-large gauge quotient although $\IV$ does not decompose into a direct sum of two complementary superfields: only the ``lower half'' of $\IV$ is a proper sub-superfield, identified with $\Imm(\bm\L)$ in the Wess-Zumino gauge. Thus, $\IV$ is reducible but indecomposable.

\subsection{Adinkraic Supermultiplets}
All off-shell worldline supermultiplets discussed in Refs.\cite{rA} and then formalized rigorously in Refs.\cite{r6-1,r6--1} admit a basis of component fields such that the application of any one supercharge $Q_I$ on any one component field always produces precisely one other of the component fields or its $\t$-derivative. Such supermultiplets were called {\em\/adinkraic\/} and can be depicted by {\em\/Adinkras\/}, specific graphs that faithfully encode the precise supersymmetry transformations within each such supermultiplet. By extension, a supermultiplet for which an adinkraic basis of component fields can be obtained by means of local component field redefinitions is called {\em\/adinkrizable\/}.

As Ref.\cite{r6-1} proves, in adinkraic supermultiplets it is always possible to rescale the (real) component fields so that (with no summations):
\begin{equation}
  Q_I(\f_a)  = \pm (\vdt^{\,e_{a,\a}}\j_\a),\quad
  Q_I(\j_\a) = \pm i(\vdt^{1-e_{a,\a}}\f_a),\quad
  e_{a,\a}\Defl\frc12{+}[\f_a]{-}[\j_\a].
 \label{e:A}
\end{equation}
This system of $Q$-transformations, complete with the specific sign-choices, is completely encoded by the graphical elements of an Adinkra\cite{r6-1}, which made the classification results of Refs.\cite{r6-3,r6-3.2,r6-3.1} possible, as well as the 1--1 translation into the familiar superfield framework\cite{r6-1.2}.
\ping

There of course exist non-adinkraic off-shell supermultiplets, which do not admit a basis of component fields wherein the simple pattern\eq{e:A} holds.
Nevertheless, recent study shows that all off-shell supermultiplets from the familiar literature on simple supersymmetry in 4-dimensional spacetime\cite{rUMD09-1,rUMD09-2,rGHHS-CLS,rUMD12-3} are either themselves adinkraic or can be described in terms of Adinkras. This is also true of at least several off-shell supermultiplets of ${\cal N}\,{=}\,2$-extended supersymmetry in 4-dimensional spacetime\cite{r6-4}, and also of the constructions of off-shell worldline supermultiplets as given in Refs.\cite{rGIKT12} and\cite{rTHGK12}, the latter one of which provides an infinite sequence of new, ever larger supermultiplets.

In all these examples of non-Adinkraic off-shell supermultiplets, the application of the supercharges $Q_I$ on individual component fields does not always result in the monomials\eq{e:A}, but requires a binomial or larger linear combination of component fields and their $\t$-derivatives. Also, all these non-adinkraic supermultiplets may be related to adinkraic ones by means of non-local field redefinitions. Our subsequent results will not only prove that this is in fact true of all off-shell worldline supermultiplets, but will also provide a constructive algorithm to this end.

\subsection{Valise Supermultiplets}
If the component fields in an engineerable supermultiplet have only two distinct engineering dimensions (differing by $\frc12$), one for bosons and another for fermions, we call this a {\em valise supermultiplet}. In Ref.~\cite{r6-3}, Adinkras with this property were called Isoscalar and Isospinor, for when bosons or fermions have a lower engineering dimension, respectively; the boson/fermion-indiscriminate moniker ``valise'' was adopted in Ref.~\cite{r6-3.2}, so this is a generalization of that terminology. This was also called the ``Clifford algebraic supermultiplet'' in Ref.\cite{rGLP} and plays prominent role in the ``root superfield'' formalism\cite{rGLP,rA,rBKMO}, which is then also generalized by the subsequent results.

We are now in position to state and prove the following important result.
\begin{thrm}[valises]\label{T:valise}
Every valise supermultiplet is adinkraic, and decomposes as a direct sum of minimal valise supermultiplets, each identifiable with a minimal valise Adinkra.
\end{thrm}

\begin{proof}
Without loss of generality, suppose the bosonic component fields in the valise supermultiplet have the lower engineering dimension.  Then if $\f$ is any boson, $Q_I(\f)$ is a linear combination of the fermions, while if $\j$ is any fermion, $Q_I(\j)=\vd_t(\ell)$ where $\ell$ is a linear combination of the bosons.

If we choose a basis $\vf_1,\ldots,\vf_d$ for the bosons and $\c_1,\ldots,\c_d$ for the fermions, these $Q$-transformation rules can be written as
\begin{equation}
  Q_I(\vf_a)  = \sum_\a [L_I]_a{}^\a\, \c_\a,\qquad\text{and}\qquad
  Q_I(\c_\a) = i\sum_a [R_I]_\a{}^a\, \dt\vf_a,
 \label{e:valise}
\end{equation}
where $[L_I]$ and $[R_I]$ are matrices. 
 Following the analysis in Ref.~\cite{r6-3.1}, we define the fermion counting operator $(-1)^F$ so that
\begin{equation}
   (-1)^F \f_a=\f_a
    \qquad\text{and}\qquad
   (-1)^F \j_\a=-\j_\a,
\end{equation}
and define the $2d\times2d$ matrices
\begin{equation}
 (-1)^F=
 \G_0\Defl\left[\begin{array}{c|c}
            \Ione & ~~\,\bm0\\ \hline
            \bm0  & {-}\Ione
          \end{array} \right]
 \qquad\text{and}\qquad
 \G_I\Defl\left[\begin{array}{c|c}
            \bm0 & L_I\\ \hline
            R_I & \bm0
          \end{array} \right]
 \quad\text{for }I=1,{\cdots}\,,N.
\end{equation}
These matrices, $\{\G_0,\G_1,\ldots,\G_N\}$, satisfy the relations for a Clifford algebra $\Cl(0,N{+}1)$, and so in this way, form a real matrix representation $\bm{M}$ of $\Cl(0,N{+}1)$. The relations\eq{e:valise} identify this matrix representation of $\Cl(0,N{+}1)$ with the supermultiplet $(\vf_1,{\cdots},\vf_d|\c_1,{\cdots},\c_d)$.

It is a standard result\cite{rLM} that all real representations of $\Cl(0,N{+}1)$ decompose into a direct sum of irreducible representations:  For $N\,{=}\,0\!\pmod4$, there exist two equal-sized but distinct isomorphism classes of irreducible representations\ft{ It is the existence of these two distinct isomorphism classes that makes supermultiplet twisting\cite{rTwSJG0,rGHR} nontrivial. As representations of $\Cl(0,N{+}1)$ also correspond to the somewhat more familiar spinor representations of $\text{Spin}(N)$, the Reader may recognize this fact by recalling that for $k,n\in\ZZ$, $\Spin(n,8k{+}n)$ have Majorana-Weyl minimal spinors, while $\Spin(n,4k{+}n)$ for odd $k$ have complex conjugate pairs of minimal spinors.}; otherwise, there exists only one.

It was shown in Refs.~\cite{r6-3.2,r6-3.1} that the irreducible representation(s) of $\Cl(0,N{+}1)$ are adinkrizable. That is, each of them corresponds 1--1 to a valise supermultiplet akin to\eq{e:valise}, but where the linear combinations in the result of applying the $Q_I$-operators reduce to monomials\ft{The $[L_I]$- and $[R_I]$-matrices then have a single nonzero entry in every row and column, and are called {\em\/monomial\/}.}. This valise supermultiplet then may be depicted by an Adinkra, by assigning a node to each component field and an $I$-colored edge for every instance of a relation $Q_I(\vf_a)=\pm i\c_\a$, drawing the edge solid for the positive sign and dashed for the negative sign. Furthermore, these irreducible representations of $\Cl(0,N{+}1)$ are clearly minimal, and so must correspond to minimal valise Adinkras and supermultiplets for any given $N$.

Now return to our real representation $\bm{M}$. It is a standard result\cite{r6-3.1} that, as a representation of $\Cl(0,N{+}1)$, $\bm{M}$ decomposes into a direct sum of irreducible representations.  For each direct summand, choose the basis given so that its corresponding valise supermultiplet is described using an Adinkra as per Ref.\cite{r6-3,r6-3.1}.  The result is a basis for $\bm{M}$ in which the valise supermultiplet\eq{e:valise} is adinkraic, and in fact decomposes as the corresponding direct sum of minimal Adinkras (for the given $N$).
\end{proof}
\noindent Without loss of generality, we identify an Adinkra with the supermultiplet it depicts.

It also follows that there exist two distinct isomorphism classes of minimal valise supermultiplets for $N\,{=}\,0\!\pmod4$, one being referred to as the twisted version of the other\cite{rTwSJG0,rGHR}. In turn, there is only one isomorphism class for $N\,{\neq}\,0\!\pmod4$.
 Refs.\cite{r6-3,r6-3.1} prove that
\begin{equation}
  \min\big(\dim(\text{Adinkra})\big) = (2^{N-\vk(N)-1}|2^{N-\vk(N)-1}),
 \label{e:mindim}
\end{equation}
where
\begin{equation}
  \vk(N)
  =\left\{\begin{array}{l@{~~\text{for}~}l}
                 0 & N=0,1,2,3;\\
                 1 & N=4,5;\\
                 2 & N=6;\\
                 3 & N=7;\\
                 4+\vk(N{-}8) & N\geqslant8.\\
          \end{array}\right.
\end{equation}
That is, minimal supermultiplets then have $2^{N-\vk(N)-1}$ real bosonic component fields, and as many fermionic ones.

\subsection{Raising and Lowering}
Refs.\cite{rGR1,rGR2} and then\cite{rGLP,rA,rBKMO} started exploring the systematic use of an operation variously called ``automorphic duality,'' ``1D duality'' and ``auxiliary/physical duality,'' of which a refinement (to individual component fields) was named ``node raising and lowering'' owing to its manifest depiction in terms of Adinkras\cite{r6-1}.
 These operations easily generalize to all engineerable supermultiplets:
\begin{defn}[raising/lowering]\label{D:R/L}
Let $\cM$ be an engineerable off-shell supermultiplet, and $\ell=\sum_Ac_AF_A$ a real linear combination of component fields of $\cM$, all with the same engineering dimension.
\begin{itemize}\addtolength{\leftskip}{1pc}\itemsep=-3pt\vspace{-4mm}
 \item[{\rm a.}] If $Q_I(\ell)$ involves no derivative of any component field of $\cM$ for any $I$, 
  replacing any one of $F_A\in\cM$ with $L=\skew3\dt{\ell}$ and assigning
  $Q_I(L)=\vd_t(Q_I(\ell))$ produces a new engineerable off-shell supermultiplet
  $\cM^\sharp_\ell$.
  This is called ``{\em\/raising $\bm\ell$\/}'' and $[L]=[\ell]{+}1$.
 \item[{\rm b.}] If $Q_I(\ell)=\vdt(f_I)$ is a total $\t$-derivative for each $I$,
  replacing any one of $F_A\in\cM$ with $L=\int\!\rd\t\,\ell$ and assigning
  $Q_I(\ell)=f_I$ produces a new engineerable off-shell supermultiplet $\cM^\flat_\ell$.
  This is called ``{\em\/lowering $\bm\ell$\/}'' and $[L]=[\ell]{-}1$.
\end{itemize}
\end{defn}
\paragraph{Note:} For the subsequent theorem, we will only need the special case of the raising operation when $\ell$ is in fact a single component field. There is no reason, however, not to provide the general definition.

If $\cM$ can be depicted by an Adinkra and $\ell$ is a single component field (represented by a node), the operations of raising/lowering $\ell$ then reduce to ``node raising/lowering''\cite{r6-1}, \ie, ``auxiliary/physical duality'' of Ref.\cite{rGLP}. When performed {\em\/simultaneously\/} on each one of the $\binom{N}k$ component fields equal to $Q_{I_1}{\cdots}Q_{I_k}(\f)$ for a fixed $\f$ and $k$, and replacing each of these component fields separately with another one of one unit higher or lower engineering dimension, these operations reproduce the ``automorphic duality'' of Ref.\cite{rGR1,rGR2,rGLP,rA}. A matrix realization of the node-raising operation was also introduced in Ref.\cite{rPT}, and was subsequently called the ``dressing transformation''\cite{rKRT}.

In the general cases covered by the above definitions, the linear combinations of fields being raised or lowered can extend over any subset of component fields of the same engineering dimension in the original off-shell supermultiplet $\cM$. The coefficients in the linear combination are here assumed to be real\ft{The definitions are straightforward to adapt to working over complex or hypercomplex supermultiplets.}, but are otherwise arbitrary and can be varied continuously.

\section{The Main Theorem}
 \label{s:Main}
The foregoing discussion was framed to state:
\begin{thrm}[adinkraic analysis/synthesis]\label{T:main}
Every engineerable off-shell worldline supermultiplet $\cM$ with finitely many component fields is equivalent by local field redefinitions to a supermultiplet obtained from a direct sum of minimal valise Adinkras, by iteratively lowering linear combinations of nodes.
\end{thrm}

\begin{proof}
Proposition~\ref{P:B-F=1/2} decomposes $\cM$ into parts in each of which $[\text{fermion}]-[\text{boson}]=\frc12\pmod\ZZ$; we work with each of these parts in turn. For simplicity, each such part on which we focus iteratively will continue to be denoted by $\cM$.

Consider the $2d$ component fields $\cM$ and their engineering dimensions. Let $m$ be the minimum, and $M$ the maximum of these engineering dimensions in $\cM$. If $M=m{+}\frc12$, $\cM$ is already a valise supermultiplet ($\cM=\cM^v$), skip to {\bsf Part~2}; if $M>m{+}\frc12$, proceed.

\paragraph{Part~1:}
Choose any one component field $f_1$ with engineering dimension $m$ (while $m<M{-}\frc12$) and raise it.  This results in a new engineerable supermultiplet $\cM^\sharp_1$.  The new raised field now has engineering dimension $m+1$, and if $M-m\ge 1$, then $\cM^\sharp_1$ will still have maximum engineering dimension $M$. Repeating this process reduces the number of component fields with engineering dimension $m$, until there are none.  Then we have a new supermultiplet with the minimum engineering dimension $m+\frac12$. Keep repeating this process, until the minimum engineering dimension increases to $M-\frac12$.

This transforms the original supermultiplet $\cM$ into an associated valise supermultiplet $\cM^v$, with a finite sequence of component fields $(\f_1,{\cdots}\,\f_d|\j_1,{\cdots}\,\j_d)$, where $[\f_a]=m=M{-}\frc12$ and $[\j_\a]=M$, or the other way around.

\paragraph{Part~2:}
By Theorem~\ref{T:valise}, the valise supermultiplet $\cM^v$ admits a basis $(\vf_1,{\cdots}\,\vf_d|\c_1,{\cdots},\c_d)$ where each of the component fields $(\vf_a|\c_\a)$ is a linear combination of the component fields $(\f_a|\j_\a)$, which decomposes $\cM^v\simeq\Tw\cM^v=\oplus_i\,\Tw\cM^v_i$ as a direct sum of minimal valise supermultiplets. Each minimal valise supermultiplet $\Tw\cM^v_i$ may be identified with a minimal valise Adinkra, and each of the basis elements $(\vf_a|\c_\a)$ with a node.

\paragraph{Part~3:}
Inverting the linear combinations from Part~2, the fields $\f_a,\j_\a\in\cM^v$ can now be written as linear combinations of the nodes $\vf_a,\c_\a\in\oplus_i\,\Tw\cM^v_i$.  Reversing the procedure of Part~1, we iteratively lower $\f_1$ (as a linear combination of the nodes $\vf_a$), then $\f_2$ and so on, until each of these the nodes is lowered to its original engineering dimension.

The result is the original supermultiplet $\cM$, reconstructed as a systematically iterated lowering of the direct sum of minimal valise Adinkras, $\oplus_i\,\Tw\cM^v_i$.
\end{proof}
\paragraph{Note:} It may well happen that the supermultiplet can decompose into a direct sum of minimal supermultiplets before Part~1 of the procedure in the proof is completed; see Section~\ref{s:GKY}.

This general result, true for all $N$-extended supersymmetry algebras without central extension\eq{e:SuSy}, covers the $N\,{=}\,4$ partial results obtained to date for a handful of supermultiplets obtained by dimensional reduction from simple supersymmetry in 4-dimensional spacetime: see Table~1 of Ref.\cite{rUMD09-2} and Tables~10--12 of Ref.\cite{rUMD12-3}.

\section{Non-Adinkraic Supermultiplets}
While the most familiar and oft-used supermultiplets of simple supersymmetry in 4-dimensional spacetime turn out to be adinkraic and are easily depicted using Adinkras\cite{rUMD09-1,rUMD09-2}, there do exist examples where this is not true. We now turn to such non-adinkraic examples, to demonstrate the effectiveness of Theorem~\ref{T:main}. The Adinkras or Adinkra-like graphs depicting the supersymmetry transformations will illustrate the procedure.

\subsection{A Gauge-Quotient Example}
 \label{s:GKY}
Ref.~\cite{rTHGK12} constructs an off-shell gauge-quotient supermultiplet of $N\,{=}\,3$ worldline supersymmetry, $\IY_I/(iD_I\IX)$, and shows it to be non-Adinkraic, in that a minimum of six transformation rules involve binomial combinations of fields; see\eq{e:GKY}, reproduced here in the simpler notation of Ref.~\cite{rTHGK13}:
\begin{equation}
  \begin{array}{@{} c|c@{~~}c@{~~}c @{}}
 & \C1{Q_1} & \C2{Q_2} & \C3{Q_3} \\ 
    \hline\rule{0pt}{2.1ex}
\f_1
 & \j_1 & \j_2 & \j_3 \\[0pt]
\f_2
 & \j_3 & -\j_4 & -\j_1 \\[0pt]
\f_3
 & \j_4\Gr{{-}\j_7} & \j_3\Gr{{-}\j_5}
   & -\j_2\Gr{{+}\j_6} \\[-2pt]
\f_4
 & \j_5 & -\j_7 & -\j_8 \\[0pt]
\f_5
 & -\j_6 & \j_8 & -\j_7 \\*[3pt]
    \cline{2-4}\rule{0pt}{2.5ex}
F_1
 & \dt\j_2 & -\dt\j_1 & \dt\j_4 \\[0pt]
F_2
 & \dt\j_8 & \dt\j_6 & \dt\j_5 \\[0pt]
F_3
 & \dt\j_7 & \dt\j_5 & -\dt\j_6 \\
    \hline
  \end{array}
 \qquad
  \begin{array}{@{} c|c@{~~}c@{~~}c @{}}
 & \C1{Q_1} & \C2{Q_2} & \C3{Q_3} \\ 
    \hline\rule{0pt}{2.9ex}
\j_1
 & i\dt\f_1 & -iF_1 & -i\dt\f_2 \\[0pt]
\j_2
 & iF_1 & i\dt\f_1 & -i\dt\f_3\Gr{{-}iF_3} \\[0pt]
\j_3
 & i\dt\f_2 & i\dt\f_3\Gr{{+}iF_3} & i\dt\f_1 \\[0pt]
\j_4
 & i\dt\f_3\Gr{{+}iF_3} & -i\dt\f_2 & iF_1 \\[0pt]
\j_5
 & i\dt\f_4 & iF_3 & iF_2 \\[0pt]
\j_6
 & -i\dt\f_5 & iF_2 & -iF_3 \\[0pt]
\j_7
 & iF_3 & -i\dt\f_4 & -i\dt\f_5 \\[0pt]
\j_8
 & iF_2 & i\dt\f_5 & -i\dt\f_4 \\[0pt]
    \hline
  \end{array}
  \label{e:GKY}
\end{equation}
Nevertheless, the transformation rules\eq{e:GKY} {\em\/can\/} be represented graphically, but doing so requires that several edges (depicted by tapering lines) to indicate a ``one way'' $Q_I$-transformation; see Figure~\ref{f:GKY}.
\begin{figure}[ht]
\centering
 \begin{picture}(140,45)
   \put(0,0){\includegraphics[width=140mm]{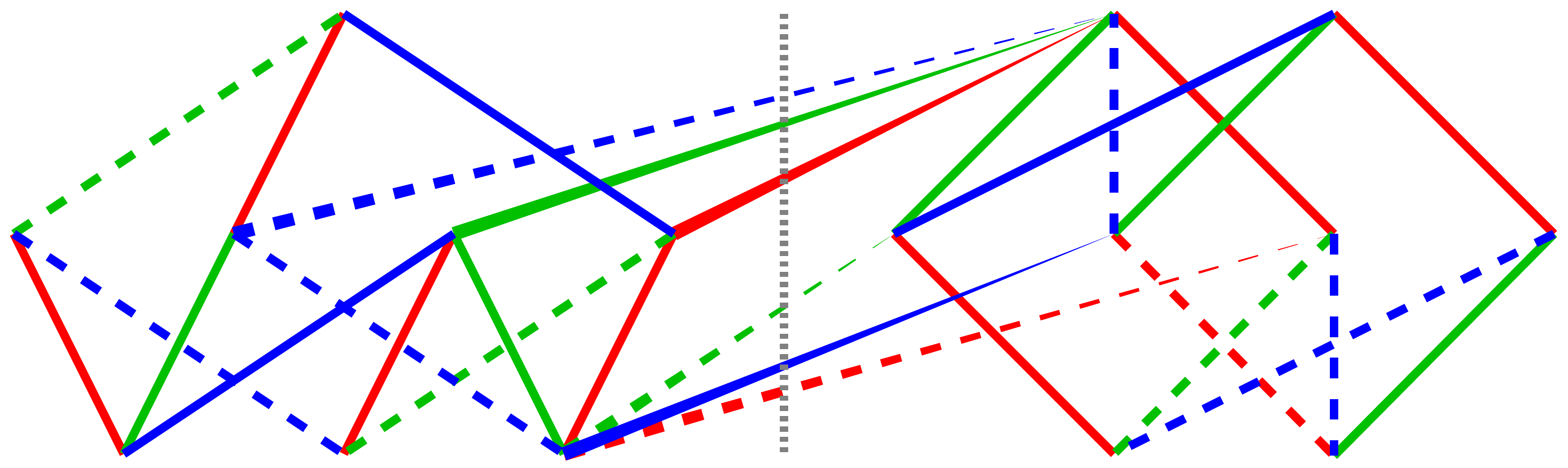}}
    \put(12,1){\cB{$\f_1$}}
    \put(31,1){\cB{$\f_2$}}
    \put(50.5,1){\cB{$\f_3$}}
    \put(99.5,1){\cB{$\f_4$}}
    \put(119,1){\cB{$\f_5$}}
    \put(2,20){\bB{$\j_1$}}
    \put(21,20){\bB{$\j_2$}}
    \put(40,20){\bB{$\j_3$}}
    \put(59,20){\bB{$\j_4$}}
    \put(80,20){\bB{$\j_5$}}
    \put(99,20){\bB{$\j_6$}}
    \put(118,20){\bB{$\j_7$}}
    \put(137,20){\bB{$\j_8$}}
    \put(31,40){\cB{$F_1$}}
    \put(99.5,40){\cB{$F_3$}}
    \put(119,40){\cB{$F_2$}}
 \end{picture}
\caption{A graphical depiction of the gauge-quotient supermultiplet of Ref.\protect\cite{rTHGK12}.}
 \label{f:GKY}
\end{figure}
These correspond to the entries in the table\eq{e:GKY} that are set in a lighter ink. For example, $\C1{Q_1}(\f_3)$ includes both $\j_4$ and $\j_7$, but only the transformation of $\C1{Q_1}(\j_4)$ contains $\dt\f_3$, the transformation $\C1{Q_1}(\j_7)$ doesn't. The two-way $\C1{Q_1}$-transformation $\f_3\iff\j_4$ is then depicted by a standard edge\ft{Edges drawn in the $I^\text{th}$ color signify $Q_I$-transformation: solid for a positive sign and dashed for a negative sign.}, while the one-way $\C1{Q_1}$-transformation $\f_3\to{-}\j_7$ is depicted by a tapering edge, which is also dashed, indicating the negative sign.

Following the constructive proof of Theorem~\ref{T:main}, in this supermultiplet we:
 ({\small\bf1})~raise the node corresponding to $\f_3\mapsto\dt\f_3$ (this loses the constant term in $\f_3$), and
 ({\small\bf2})~perform the linear combination change of variables $\dt\f_3\mapsto Z\Defl\dt\f_3{+}F_3$, the net effect of which is the field substitution
\begin{equation}
   \f_3~~\mapsto~~Z\Defl\dt\f_3{+}F_3.
 \label{e:f3r}
\end{equation}
This produces an associated supermultiplet:
\begin{equation}
  \begin{array}{@{} c|c@{~~}c@{~~}c @{}}
 & \C1{Q_1} & \C2{Q_2} & \C3{Q_3} \\ 
    \toprule
\f_1
 & \j_1 & \j_2 & \j_3 \\[-1pt]
\f_2
 & \j_3 & -\j_4 & -\j_1 \\[-1pt]
\f_4
 & \j_5 & -\j_7 & -\j_8 \\[-1pt]
\f_5
 & -\j_6 & \j_8 & -\j_7 \\*[3pt]
    \cline{2-4}\rule{0pt}{2.5ex}
F_1
 & \dt\j_2 & -\dt\j_1 & \dt\j_4 \\[1pt]
F_2
 & \dt\j_8 & \dt\j_6 & \dt\j_5 \\[1pt]
F_3
 & \dt\j_7 & \dt\j_5 & -\dt\j_6 \\[1pt]
Z
 & \dt\j_4 & \dt\j_3  & -\dt\j_2 \\[2pt]
    \bottomrule
  \end{array}
 \qquad
  \begin{array}{@{} c|c@{~~}c@{~~}c @{}}
 & \C1{Q_1} & \C2{Q_2} & \C3{Q_3} \\ 
    \toprule
\j_1
 & i\dt\f_1 & -iF_1 & -i\dt\f_2 \\[0pt]
\j_2
 & iF_1 & i\dt\f_1 & -iZ \\[0pt]
\j_3
 & i\dt\f_2 & iZ & i\dt\f_1 \\[0pt]
\j_4
 & iZ & -i\dt\f_2 & iF_1 \\[0pt]
\j_5
 & i\dt\f_4 & iF_3 & iF_2 \\[0pt]
\j_6
 & -i\dt\f_5 & iF_2 & -iF_3 \\[0pt]
\j_7
 & iF_3 & -i\dt\f_4 & -i\dt\f_5 \\[0pt]
\j_8
 & iF_2 & i\dt\f_5 & -i\dt\f_4 \\[0pt]
    \bottomrule
  \end{array}
  \label{e:GKYa}
\end{equation}
 which is clearly adinkraic: each instance of the $Q_I$-transformation of any one field is a monomial in terms of the component fields and their derivatives; this is depicted in Figure~\ref{f:GKYa}.
\begin{figure}[ht]
\centering
 \begin{picture}(140,45)
   \put(0,0){\includegraphics[width=140mm]{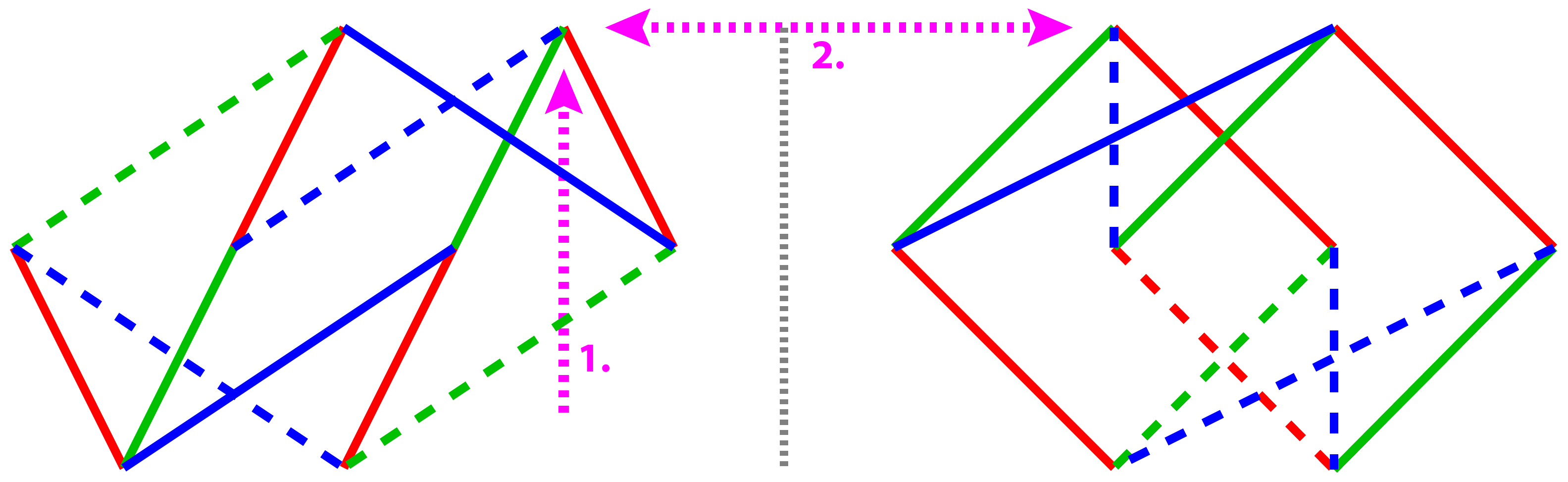}}
    \put(12,1){\cB{$\f_1$}}
    \put(31,1){\cB{$\f_2$}}
    \put(50.5,1){\cB{$\f_3$}}
    \put(99.5,1){\cB{$\f_4$}}
    \put(119,1){\cB{$\f_5$}}
    \put(2,20){\bB{$\j_1$}}
    \put(21,20){\bB{$\j_2$}}
    \put(40,20){\bB{$\j_3$}}
    \put(59,20){\bB{$\j_4$}}
    \put(80,20){\bB{$\j_5$}}
    \put(99,20){\bB{$\j_6$}}
    \put(118,20){\bB{$\j_7$}}
    \put(137,20){\bB{$\j_8$}}
    \put(31,40){\cB{$F_1$}}
    \put(99.5,40){\cB{$F_3$}}
    \put(119,40){\cB{$F_2$}}
    \put(50.5,40){\cB{$Z$}}
 \end{picture}
\caption{A graphical depiction of the adinkrized supermultiplet of Ref.\protect\cite{rTHGK12}. The two numbered arrows indicate the ({\bf1})~raising $\f_3$, and ({\bf2})~combining $\protect\dt\f_3{+}F_3$ into $Z$; it is the latter, now local field redefinition that disconnects the two Adinkras.}
 \label{f:GKYa}
\end{figure}
Note that the single component field redefinition\eq{e:f3r} procedure transformed:
\begin{enumerate}\itemsep=-3pt\vspace{-2mm}
 \item the $(5|8|3)$-component supermultiplet\eq{e:GKY} into a $(4|8|4)$-component one\eq{e:GKYa},
 \item which decomposes even before completing the procedure in the proof of Theorem~\ref{T:main}.
\end{enumerate}
By\eq{e:mindim}, the minimal representation of the $N\,{=}\,3$-extended worldline supersymmetry has four bosons and four fermions, and the two Adinkras in Figure~\ref{f:GKYa} are indeed minimal.

Reversing this transformation, we can start with the adinkraic supermultiplet\eq{e:GKYa}, and lower the linear combination
\begin{equation}
   \big(\ell\Defl Z-F_3\big) \mapsto \dt\f_3
   \qquad\textit{i.e.},\qquad
   \f_3 \Defl \int\!\rd\t\,\big(Z(\t)-F_3(\t)\big) + \textit{const.}
 \label{e:GKYnl}
\end{equation}
to achieve the gauge-quotient supermultiplet of Refs.\cite{rTHGK12,rTHGK13}. Notice that the integration constant recovers the constant term in $\f_3$ which was lost in raising $\f_3$\eq{e:f3r}. Thus, the non-adinkraic supermultiplet\eq{e:GKY} may indeed be understood as being a non-local component field transformation\eq{e:GKYnl} of the direct sum of two minimal Adinkras in Figure~\ref{f:GKYa}.

The reader might wonder whether perhaps the non-local transformation\eq{e:GKYnl} in fact somehow establishes an effective equivalence of\eq{e:GKY} and\eq{e:GKYa}. To show that this is not so, Ref.\cite{rTHGK13} constructs a 13-parameter family of Lagrangians (even while restricting to just bilinear terms!), where---for generic choices\ft{These choices do span a 13-dimensional open region where the Lagrangians define unitary models\cite{rTHGK13}.} of the 13 parameters---the following holds:
 \begin{quote}
   The generic (even if just bilinear) Lagrangians\cite{rTHGK13} for the supermultiplet\eq{e:GKY} depend on the component field $\f_3$ in ways that the transformation\eq{e:GKYnl} cannot be used to eliminate $\f_3 \to Z$ without rendering the generic Lagrangian non-local.
 \end{quote}
These ultimately dynamical considerations prove that the supermultiplet\eq{e:GKY} must be considered physically inequivalent from\eq{e:GKYa}. In turn, Theorem~\ref{T:main} provides a direct but non-local relationship suitable for classification purposes; see below.

\subsection{The Complex Linear Supermultiplet}
Ref.\cite{rGHHS-CLS} analyzes a well-documented representation of simple supersymmetry in 4-dimensional spacetime, the complex linear supermultiplet\cite{r1001}, dimensionally reduced to the $n\,{=}\,4$-extended supersymmetry of 1-dimensional worldline. This supermultiplet is defined by way of the complex superfield satisfying the quadratic superdifferential constraint $\bar{D}^{\dot\a}\bar{D}_{\dot\a}\S=0$, and Ref.\cite{rGHHS-CLS} then traces through the ensuing conditions and identifications imposed on the real and imaginary parts of the complex component fields, settling finally on a basis that maximally simplifies the result. Table~\ref{t:CLS} presents the supersymmetry transformations in a slightly adapted version of this optimal basis,
\begin{table}[htb]
\caption{The supersymmetry transformation rules in the complex linear supermultiplet dimensionally reduced to the 1-dimensional worldline, adapted from Ref.\protect\cite{rGHHS-CLS}}
\vspace*{-3mm}\footnotesize
$$\begin{array}{@{} c|cccc @{}}
 & \C1{Q_1} & \C2{Q_2} & \C3{Q_3}  & \C4{Q_4}\strut\\ 
    \toprule\vspace*{-2pt}
  K &  \z_1  &  \z_2 &  \z_3 &  \z_4\\
  L & - \rho_4 &  \rho_3 & - \rho_2 &  \rho_1\\
 \midrule
 X_{12} & \dt{\z}_2 &-\dt{\z}_1 & \dt{\z}_4\Gr{{-}\b_3} & -\dt{\z}_3\Gr{{-}\b_4}\\
 X_{14} & \dt{\z}_4 & \dt{\z}_3\Gr{{+}\b_3} & -\dt{\z}_2\Gr{{-}\b_1} & -\dt{\z}_1\\
 X_{24} &-\dt{\z}_3\Gr{{-}\b_4} & \dt{\z}_4 & \dt{\z}_1\Gr{{-}\b_2} & -\dt{\z}_2\\[1pt]
 Y_{12} & \dt{\r}_3 & \dt{\r}_4 & \dt{\r}_1\Gr{{-}\b_2} & \dt{\r}_2\Gr{{+}\b_1}\\
 Y_{14} & \dt{\r}_1 & -\dt{\r}_2\Gr{{-}\b_1} & -\dt{\r}_3\Gr{{-}\b_4} & \dt{\r}_4\\
 Y_{24} & \dt{\r}_2\Gr{{+}\b_1} & \dt{\r}_1 & -\dt{\r}_4\Gr{{+}\b_3} & -\dt{\rho}_3\\[1pt]
  \cline{2-5}\rule{0pt}{2.5ex}
 Z_1 & \b_1 & \b_2 & \b_3 & \b_4 \\
 Z_2 & \b_2 &-\b_1 &-\b_4 & \b_3 \\
 Z_3 & \b_3 & \b_4 &-\b_1 &-\b_2 \\
 Z_4 & \b_4 &-\b_3 & \b_2 &-\b_1 \\
    \bottomrule
 \end{array}\qquad
\begin{array}{@{} c|cccc @{}}
  & \C1{Q_1} & \C2{Q_2} & \C3{Q_3}   & \C4{Q_4}\strut\\ 
    \toprule
  \z_1   &  i\dt{ K} & -iX_{12} & i(X_{24}\Gr{{+}Z_4}) & -iX_{14}\\
  \z_2   &  iX_{12} & i\dt{ K}  & -i(X_{14}\Gr{{-}Z_3}) & -iX_{24}\\
  \z_3   & -i(X_{24}\Gr{{+}Z_4}) & i(X_{14}\Gr{{-}Z_3})  & i\dt{K}  & -i(X_{12}\Gr{{+}Z_1})\\
  \z_4   &  iX_{14} & iX_{24}  & i(X_{12}\Gr{{+}Z_1})  & i \dt{K}\\
  \r_1   &  iY_{14} & iY_{24} & i(Y_{12}\Gr{{+}Z_4}) & i\dt{L}\\
  \r_2   &  i(Y_{24}\Gr{{-}Z_1}) & -i(Y_{14}\Gr{{-}Z_2})  & -i\dt{L} & i(Y_{12}\Gr{{+}Z_4})\\
  \r_3   &  iY_{12} & i\dt{L}  & -i(Y_{14}\Gr{{-}Z_2})  & -iY_{24}\\
  \r_4   & -i\dt{L} & iY_{12}  & -i(Y_{24}\Gr{{-}Z_1}) & iY_{14}\\
 \midrule
 \b_1 & i\dt{Z}_1 & -i\dt{Z}_2 & -i\dt{Z}_3 & -i\dt{Z}_4 \\
 \b_2 & i\dt{Z}_2 & i\dt{Z}_1  & i\dt{Z}_4  & -i\dt{Z}_3 \\
 \b_3 & i\dt{Z}_3 & -i\dt{Z}_4 & i\dt{Z}_1  & i\dt{Z}_2 \\
 \b_4 & i\dt{Z}_4 & i\dt{Z}_3  & -i\dt{Z}_2 & i\dt{Z}_1 \\
    \bottomrule
  \end{array}$$
\label{t:CLS}
\end{table}
and which is also faithfully depicted by the graph in Figure~\ref{f:CLS}.
\begin{figure}[htb]
 \begin{center}
  \begin{picture}(160,55)(0,14)
   \put(0,7){\includegraphics[width=160mm]{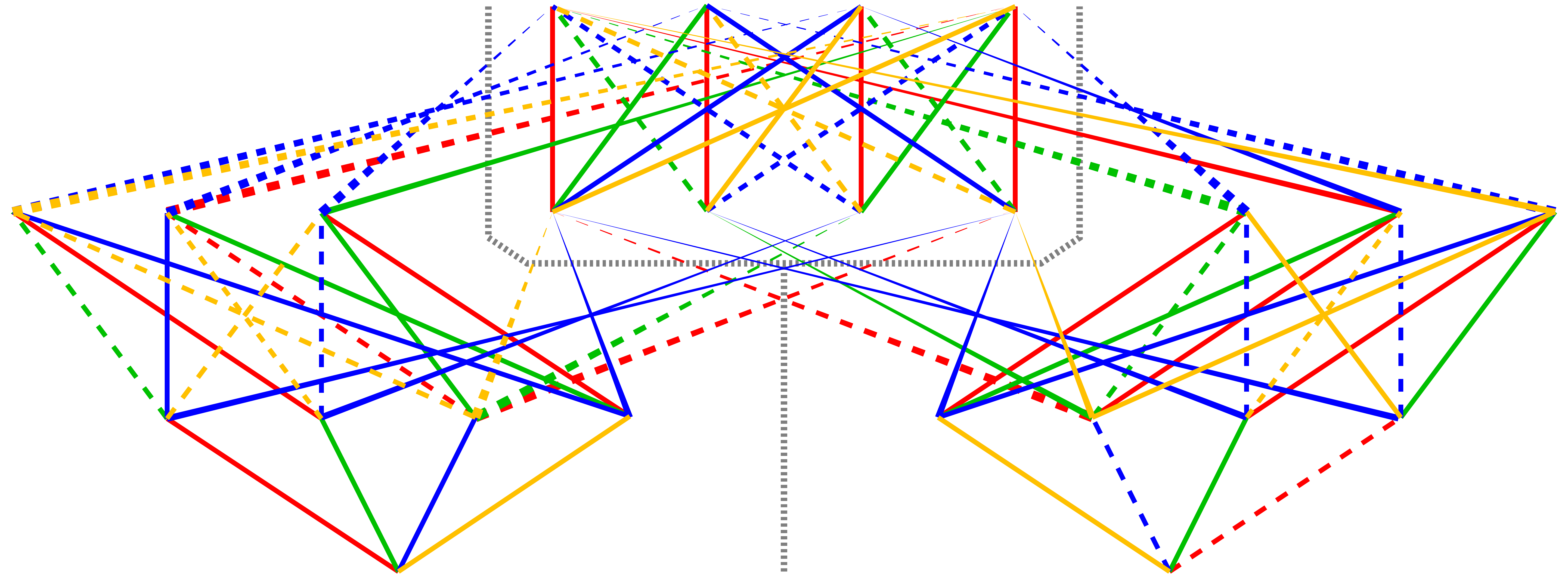}}
     \put(57,66){\bB{$\b_1$}}
     \put(72,66){\bB{$\b_2$}}
     \put(88,66){\bB{$\b_3$}}
     \put(103,66){\bB{$\b_4$}}
     \put(4,43){\cB{$X_{12}$}}
     \put(19,43){\cB{$X_{24}$}}
     \put(34,43){\cB{$X_{14}$}}
     \put(57,43){\cB{$Z_1$}}
     \put(72,43){\cB{$Z_2$}}
     \put(88,43){\cB{$Z_3$}}
     \put(103,43){\cB{$Z_4$}}
     \put(126,43){\cB{$Y_{14}$}}
     \put(141,43){\cB{$Y_{24}$}}
     \put(156,43){\cB{$Y_{12}$}}
     \put(17,22.5){\bB{$\z_1$}}
     \put(33,22.5){\bB{$\z_2$}}
     \put(49,22.5){\bB{$\z_3$}}
     \put(65,22.5){\bB{$\z_4$}}
     \put(96,23){\bB{$\r_1$}}
     \put(112,23){\bB{$\r_2$}}
     \put(127,23){\bB{$\r_3$}}
     \put(143,23){\bB{$\r_4$}}
     \put(40.5,7){\cB{$K$}}
     \put(119.5,7){\cB{$L$}}
  \end{picture}
 \end{center}
 \caption{A graph depicting the worldline ``shadow'' of the complex linear supermultiplet, adapted from Ref.\protect\cite{rGHHS-CLS}}
 \label{f:CLS}
\end{figure}

Notably, a quarter of the $Q_I$-transformations of individual component fields are binomials rather than monomials, and each of these binomials is depicted by two edges, one standard, the other (depicting a ``one-way'' action) tapered. Already the number (twenty-four) of binomials in Table~\ref{t:CLS} should indicate that the procedure of the proof of Theorem~\ref{T:main} will be considerably more involved than in the previous example. However, the statement of Theorem~\ref{T:main} also indicates a powerful tool in deciphering the optimal strategy---and even the eventual outcome.

Namely, Theorem~\ref{T:main} provides that, upon an adequate number of component field raising operations, the supermultiplet decomposes into a direct sum of {\em\/minimal\/} valise Adinkras. In turn, Ref.\cite{r6-3,r6-3.1} proved that in these minimal Adinkras, certain precisely specified higher order $Q_I$-operators act as quasi-projection operators. For the case at hand, in $N\,{=}\,4$ supersymmetry, these quasi-projection operators are
\begin{equation}
   \P^\pm \Defl \big[\,\C4{Q_4}\C3{Q_3}\C2{Q_2}\C1{Q_1}~\pm~\vdt^{~2}\,\big].
 \label{e:P}
\end{equation}
Closely related to the $\P^\pm$, Ref.\cite{r6-1.2} proves that the triple of operators
\begin{equation}
   \S^\pm_{IJ} \Defl \big[\,Q_IQ_J~\pm~\frc12\ve_{IJ}{}^{KL}Q_KQ_L\,\big]
   ~~\5{\sss\text{1--1}}{\longleftrightarrow}~~ \P^\pm
 \label{e:S}
\end{equation}
carries the same information about the structure (named ``chromotopology''\cite{r6-3}) of the $Q_I$-transformations. In particular, the two possible relative signs in the operators\eq{e:P} and\eq{e:S} are indeed indicative of the two equivalence classes of minimal Adinkras for $N\,{=}\,4$. In turn, these are well familiar from physics literature as being exemplified by the ``chiral'' and ``twisted-chiral'' superfields\cite{rTwSJG0,rGHR}, and we retain those names for the two classes.

 Now, the basis used for Table~\ref{t:CLS} is definitely not the one in which the decomposition is made manifest, even after we raise the component fields: $K,L$ into $\dt{K},\dt{L}$, and then $\z_i,\r_i$ into $\dt\z_i,\dt\r_i$. However, the operators\eq{e:P} and\eq{e:S} simplify the task of finding this basis as follows.
\begin{enumerate}\itemsep=-3pt\vspace{-2mm}
 \item Apply the quadratic operators\eq{e:S} on any component field. For example,
\begin{equation}
  [\C1{Q_1}\C2{Q_2}\pm\C3{Q_3}\C4{Q_4}]K = i\dt{K}\pm i\dt{K},
   \qquad\text{while}\qquad
  [\C1{Q_1}\C2{Q_2}\pm\C3{Q_3}\C4{Q_4}]Z_1 = i\dt{Z}_2\mp i\dt{Z}_2.
\end{equation}
For these to vanish, we must chose the lower sign for $K$ but the upper sign for $Z_1$, indicating that these two component fields will (upon some field redefinitions, perhaps) belong to distinct minimal Adinkras.

 \item Apply the quartic operators\eq{e:P} to any component field. For instance\ft{The operators\eq{e:P} and\eq{e:S} are defined to have the relative signs flipped if applied on fermions.},
\begin{subequations}
 \label{e:Redef}
\begin{alignat}9
  [\C4{Q_4}\C3{Q_3}\C2{Q_2}\C1{Q_1}\pm\vdt^{~2}]K
  &={\ttt\genfrac{\{}{\}}{0pt}{}{2}{0}}\ddt{K}{-}\dt{Z}_2, &\quad
  [\C4{Q_4}\C3{Q_3}\C2{Q_2}\C1{Q_1}\pm\vdt^{~2}]L
  &= \dt{Z}_3+{\ttt\genfrac{\{}{\}}{0pt}{}{2}{0}}\ddt{L}, \label{e:KL}\\
  [\C4{Q_4}\C3{Q_3}\C2{Q_2}\C1{Q_1}\pm\vdt^{~2}]X_{12}
  &=\dt{Z}_1+{\ttt\genfrac{\{}{\}}{0pt}{}{2}{0}}\ddt{X}_{12}, &\quad
  [\C4{Q_4}\C3{Q_3}\C2{Q_2}\C1{Q_1}\pm\vdt^{~2}]X_{14}
  &={\ttt\genfrac{\{}{\}}{0pt}{}{2}{0}}\ddt{X}_{14}{-}\dt{Z}_3, \label{e:XX}\\
 [\C4{Q_4}\C3{Q_3}\C2{Q_2}\C1{Q_1}\mp\vdt^{~2}]\z_1
  &= \dt\b_2-{\ttt\genfrac{\{}{\}}{0pt}{}{2}{0}}\ddt\z_1, &\quad
 [\C4{Q_4}\C3{Q_3}\C2{Q_2}\C1{Q_1}\mp\vdt^{~2}]\z_2
  &= -\dt\b_2-{\ttt\genfrac{\{}{\}}{0pt}{}{2}{0}}\ddt\z_2,  \label{e:zi}\\
  \text{but}\quad
  [\C4{Q_4}\C3{Q_3}\C2{Q_2}\C1{Q_1}\pm\vdt^{~2}]Z_i
  &\propto(\ddt{Z}_i\mp\ddt{Z}_i), &\quad
  [\C4{Q_4}\C3{Q_3}\C2{Q_2}\C1{Q_1}\mp\vdt^{~2}]\b_i
  &=\propto(\ddt\b_i\mp\ddt\b_i). \label{e:Zb}
\end{alignat}
\end{subequations}
The fact that the quasi-projections\eq{e:Zb} on $(Z_i|\b_i)$ act as $\vdt$-multiples of the identity for the lower choice of the sign indicate that these fields already form a separate $(4|4)$-component minimal valise Adinkra. This is visible from the graph in Figure~\ref{f:CLS}, upon noticing that no edges emanate from this (top-most) portion, as was already noted in Ref.\cite{rGHHS-CLS}.

Next, given now that $(Z_i|\b_i)$ form a separate sub-supermultiplet, the relations such as\eq{e:KL}, (\ref{e:XX}) and\eq{e:zi} indicate that:
 ({\bf1})~the right-hand side field combinations should be used in component field redefinitions, and
 ({\bf2})~we {\em\/must\/} use the upper sign-choices, not to duplicate the fields $(Z_i|\b_i)$.
\end{enumerate}

Following the procedure in the proof of Theorem~\ref{T:main}, we first notice that the bosonic component fields $K,L$ satisfy the conditions for the ``raising'' part of Definition~\ref{D:R/L}, and we raise them. In the resulting supermultiplet, now the fermionic components $\z_1,{\cdots},\z_4,\rho_1,{\cdots}\rho_4$ all satisfy the conditions of the ``raising'' part of Definition~\ref{D:R/L}, and we raise them too. This results in a valise supermultiplet with 12 bosons, all with the same engineering dimension, and 12 fermions, all with the engineering dimension $\frc12$ higher than the bosons. According to Theorem~\ref{T:valise}, this decomposes into a direct sum of minimal valise Adinkras. We exhibit this decomposition by employing the component field redefinitions given by the computations of the type\eq{e:Redef}:
\begin{subequations}
 \label{e:CLS>A}\small
\begin{alignat}9
  Z_1 &\Defl  (\2X_{34}{-}X_{12}),&\quad
  Z_2 &\Defl  (\2Y_{14}{-}Y_{23}),&\quad
  Z_3 &\Defl  (\2X_{14}{-}X_{23}),&\quad
  Z_4 &\Defl  (\2Y_{34}{-}Y_{12});\\
  X_1 &\Defl  (2\dt{\2K}{-}Y_{14}{+}Y_{23}),&\quad
  X_2 &\Defl  (\2X_{12}{+}X_{34}),&\quad
  X_3 &\Defl -(2\2X_{24}{+}Y_{34}{-}Y_{12}),&\quad
  X_4 &\Defl  (\2X_{23}{+}X_{14});\\
  Y_1 &\Defl  (2\dt{\2L}{+}X_{14}{-}X_{23}),&\quad
  Y_2 &\Defl  (\2Y_{12}{+}Y_{34}),&\quad
  Y_3 &\Defl -(2\2Y_{24}{-}X_{34}{+}X_{12}),&\quad
  Y_4 &\Defl  (\2Y_{23}{+}Y_{14});\\
 \x_1 &\Defl  2\dt{\2\z}{}_1{-}\b_2,&\qquad
 \x_2 &\Defl  2\dt{\2\z}{}_2{+}\b_1,&\qquad
 \x_3 &\Defl  2\dt{\2\z}{}_3{+}\b_4,&\qquad
 \x_4 &\Defl  2\dt{\2\z}{}_4{-}\b_3;\\
 \h_1 &\Defl -2\dt{\2\r}{}_4{+}\b_3,&\qquad
 \h_2 &\Defl  2\dt{\2\r}{}_3{+}\b_4,&\qquad
 \h_3 &\Defl -2\dt{\2\r}{}_2{-}\b_1,&\qquad
 \h_4 &\Defl  2\dt{\2\r}{}_1{-}\b_2,
\end{alignat}
\end{subequations}
where we have underlined the component fields that are being eliminated by each definition. The $Q_I$-transformations in this (\ref{e:CLS>A})-transform of the complex linear supermultiplet decouple:
\begin{equation}
\begin{array}{@{} c|cccc @{}}
 \omit& \C1{Q_1}  & \C2{Q_2} & \C3{Q_3} & \C4{Q_4} \strut\\ 
    \toprule
 X_1 &  \x_1 &  \x_2 &  \x_3  &  \x_4 \\[1pt]
 X_2 &  \x_2 & -\x_1 &  \x_4  & -\x_3 \\[1pt]
 X_3 &  \x_3 & -\x_4 & -\x_1  &  \x_2 \\[1pt]
 X_4 &  \x_4 &  \x_3 & -\x_2  & -\x_1 \\[1pt]
  \hline
 Y_1 &  \h_1 &  \h_2 &  \h_3 &  \h_4 \\[1pt]
 Y_2 &  \h_2 & -\h_1 &  \h_4 & -\h_3 \\[1pt]
 Y_3 &  \h_3 & -\h_4 & -\h_1 &  \h_2 \\[1pt]
 Y_4 &  \h_4 &  \h_3 & -\h_2 & -\h_1 \\[1pt]
  \hline
 Z_1    &  \b_1 &  \b_2  & \b_3  &  \b_4 \\
 Z_2    &  \b_2 & -\b_1  & -\b_4 &  \b_3 \\
 Z_3    &  \b_3 &  \b_4  & -\b_1 & -\b_2 \\
 Z_4    &  \b_4 & -\b_3  & \b_2  & -\b_1 \\
    \bottomrule
  \end{array}
  \qquad
\begin{array}{@{} c|cccc @{}}
 \omit& \C1{Q_1}  & \C2{Q_2} & \C3{Q_3} & \C4{Q_4} \strut\\ 
    \toprule
 \x_1   &  i\dt{X}_1 & -i\dt{X}_2  & -i\dt{X}_3 & -i\dt{X}_4\\
 \x_2   &  i\dt{X}_2 &  i\dt{X}_1  & -i\dt{X}_4 &  i\dt{X}_3\\
 \x_3   &  i\dt{X}_3 &  i\dt{X}_4  &  i\dt{X}_1 & -i\dt{X}_2\\
 \x_4   &  i\dt{X}_4 & -i\dt{X}_3  &  i\dt{X}_2 &  i\dt{X}_1\\
  \hline\rule{0pt}{2.5ex}
 \h_1   &  i\dt{Y}_1 & -i\dt{Y}_2  & -i\dt{Y}_3  & -i\dt{Y}_4\\
 \h_2   &  i\dt{Y}_2 &  i\dt{Y}_1  & -i\dt{Y}_4  &  i\dt{Y}_3\\
 \h_3   &  i\dt{Y}_3 &  i\dt{Y}_4  &  i\dt{Y}_1  & -i\dt{Y}_2\\
 \h_4   &  i\dt{Y}_4 & -i\dt{Y}_3  &  i\dt{Y}_2  &  i\dt{Y}_1\\
  \hline\rule{0pt}{2.5ex}
  \b_1  &  i\dt{Z}_1  & -i\dt{Z}_2  & -i\dt{Z}_3  & -i\dt{Z}_4 \\
  \b_2  &  i\dt{Z}_2  &  i\dt{Z}_1  &  i\dt{Z}_4  & -i\dt{Z}_3 \\
  \b_3  &  i\dt{Z}_3  & -i\dt{Z}_4  &  i\dt{Z}_1  &  i\dt{Z}_2 \\
  \b_4  &  i\dt{Z}_4  &  i\dt{Z}_3  & -i\dt{Z}_2  &  i\dt{Z}_1\\
    \bottomrule
  \end{array}
 \label{e:CLS3}
\end{equation}
and may be depicted by the three minimal $N\,{=}\,4$ Adinkras given in Figure~\ref{f:CLS3}. The middle one is ``twisted'' as compared to the flanking ones: in it, all edges of the fourth color have their solidness/dashedness flipped, which corresponds to swapping $\C4{Q_4}\to-\C4{Q_4}$. Correspondingly, the middle Adinkra in Figure~\ref{f:CLS3} is annihilated by the operators\eq{e:P} and\eq{e:S} with one choice of the relative sign, while the flanking Adinkras are annihilated by these operators with the opposite choice of the relative sign. This agrees with the computations of Ref.\cite{rUMD09-2}.
\begin{figure}[ht]
\centering
 \begin{picture}(173,35)\small\unitlength=.944444mm
   \put(0,0){\includegraphics[width=170mm]{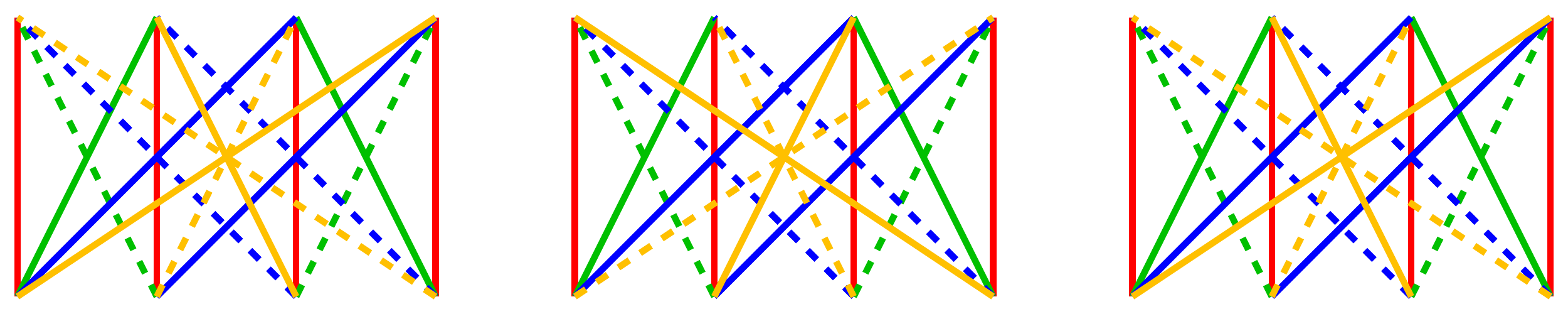}}
    \put(2,1){\cB{$X_1$}}
    \put(18,1){\cB{$X_2$}}
    \put(34,1){\cB{$X_3$}}
    \put(50,1){\cB{$X_4$}}
    \put(2,32){\bB{$\x_1$}}
    \put(18,32){\bB{$\x_2$}}
    \put(34,32){\bB{$\x_3$}}
    \put(50,32){\bB{$\x_4$}}
    \put(66,1){\cB{$Z_1$}}
    \put(82,1){\cB{$Z_2$}}
    \put(98,1){\cB{$Z_3$}}
    \put(114,1){\cB{$-Z_4$}}
    \put(66,32){\bB{$\b_1$}}
    \put(82,32){\bB{$\b_2$}}
    \put(98,32){\bB{$\b_3$}}
    \put(114,32){\bB{$-\b_4$}}
    \put(130,1){\cB{$Y_1$}}
    \put(146,1){\cB{$Y_2$}}
    \put(162,1){\cB{$Y_3$}}
    \put(178,1){\cB{$Y_4$}}
    \put(130,32){\bB{$\h_1$}}
    \put(146,32){\bB{$\h_2$}}
    \put(162,32){\bB{$\h_3$}}
    \put(178,32){\bB{$\h_4$}}
 \end{picture}
\caption{The three separate minimal Adinkras into which the worldline ``shadow'' of the complex linear supermultiplet may be transformed by means of iterative raising and linear combinations.}
 \label{f:CLS3}
\end{figure}

Finally, the original, worldline ``shadow'' of the complex linear supermultiplet (Table~\ref{t:CLS} and Figure~\ref{f:CLS}) is then reconstructed by applying the (non-local) inverse of the component field redefinitions\eq{e:CLS>A} on the direct sum Adinkra\eq{e:CLS3}, depicted in Figure~\ref{f:CLS3}.
 Note that the component field redefinitions\eq{e:CLS>A} lose the constant terms in the original component fields $K,L,\z_i,\r_i$, and that the inverse of\eq{e:CLS>A} then re-supplies these constant terms by way of integration constants. For example,
\begin{equation}
  K(\t) = \inv2\int\!\rd\t\,\big[X_1(\t)+Z_2(\t)\big]+K_0,
   \qquad
  \z_2(\t) = \inv2\int\!\rd\t\,\big[\x_2(\t)-\b_1(\t)\big]+\z_{2,0},
\end{equation}
and so on.

\section{A Continuum of Supermultiplets, and Their Classification}
The fact that the procedure of the proof of Theorem~\ref{T:valise} employs linear combinations of component fields where the coefficients are not restricted to be integers is rather suggestive.
 Indeed, we now demonstrate how a continuum of $N=5$ supermultiplets may be obtained from a certain $N=5$ Adinkra, by lowering a non-trivial linear combination of nodes.

We begin with the $N=5$ Adinkraic supermultiplet with the following field content:
\begin{equation}
  (A\,|\,\psi_I\,|\,B_{IJ}, V_I\,|\,\c_{IJ},\w)
  \qquad I,J=1,{\cdots}\,5,
\end{equation}
where the indices are $SO(5)$ indices from the supersymmetry, $B_{IJ}=-B_{JI}$ and $\c_{IJ}=-\c_{JI}$.  The supersymmetry transformations within this supermultiplet are given as follows:
\begin{subequations}
 \label{e:N5}
\begin{align}
 Q_I A      &= \j_I, \\
 Q_J \j_I   &= i\, B_{IJ}+i\,\d_{IJ}\dt{A}, \\
 Q_K B_{IJ} &= \frc12\ve_{IJK}{}^{LM}\c_{LM} + 2\d_{K[J}\dt\j_{I]}, \\
 Q_K \c_{IJ}&= 2i\,\d_{K[J}\dt{V}_{I]} + \frc{i}2\,\ve_{IJK}{}^{LM} \dt{B}_{LM}, \\
 Q_J V_I    &= \d_{IJ}\w + \c_{IJ}, \\
 Q_I \w     &=i\,\dt{V}_I,
\end{align}
\end{subequations}
where bracketed indices are antisymmetrized with weight $\frc12$: $\d_{K[J}\dt\j_{I]}\Defl\frc12(\d_{KJ}\dt\j_{I}{-}\d_{KI}\dt\j_{J})$.
As written here, it may not be immediately obvious that the result of the application of any one supercharge on any one component field in fact is a monomial. We thus include a sampling of these supersymmetry transformation rules below to illustrate this fact, \ie, that this supermultiplet is Adinkraic:
\begin{equation}
\begin{array}{c|ccccc}
 & \C1{\bm{Q_1}} &\C2{\bm{Q_2}} &\C3{\bm{Q_3}} &\C4{\bm{Q_4}} &\C5{\bm{Q_5}}\\
 \toprule
 A & \j_1 & \j_2&\j_3&\j_4&\j_5\\[2pt]
 \hline\rule{0pt}{2.5ex}
B_{12} & -\dt{\j}_2 & \dt{\j}_1 & \c_{45} & -\c_{35} & \c_{34}\\[1pt]
B_{13} & -\dt{\j}_3 & -\c_{45} & \dt{\j}_1 & \c_{25} & -\c_{24}\\
 \vdots\\
 \hline\rule{0pt}{2.5ex}
V_1 &  \w      &  \c_{12} &  \c_{13} &  \c_{14} & \c_{15}\\[1pt]
V_2 & -\c_{12} &  \w      &  \c_{23} &  \c_{24} & \c_{25}\\
\vdots\\
 \bottomrule
\end{array}
 \qquad
\begin{array}{c|ccccc}
 & \C1{\bm{Q_1}} &\C2{\bm{Q_2}} &\C3{\bm{Q_3}} &\C4{\bm{Q_4}} &\C5{\bm{Q_5}}\\
 \toprule
\j_1 & i\dt{A} & iB_{12} & iB_{13} & iB_{14} & iB_{15}\\
\j_2 & -iB_{12} & i\dt{A} & iB_{23} & iB_{24} & iB_{25}\\
 \vdots\\
 \hline\rule{0pt}{2.5ex}
\c_{12} & -i\dt{V}_2 & i\dt{V}_1 & i\dt{B}_{45} & -i\dt{B}_{35} &  i\dt{B}_{34}\\
\c_{13} & -i\dt{V}_3 & -i\dt{B}_{45} & i\dt{V}_1 & i\dt{B}_{25} & -i\dt{B}_{24}\\
 \vdots\\
 \hline\rule{0pt}{2.5ex}
\w & i\dt{V}_1 & i\dt{V}_2 & i\dt{V}_3 & i\dt{V}_4 & i\dt{V}_5\\
 \bottomrule
\end{array}
 \label{e:N5Q}
\end{equation}
The interested Reader may complete the table using the above formulae for the transformation rules.
 Instead of drawing the Adinkra in full detail\cite{r6-1}, it is more illustrative this time to see a somewhat more collapsed diagram:
\begin{equation}\unitlength=.4mm\thicklines
 \vC{\begin{picture}(150,100)(0,-5)
       \put(10,10){\circle{5}}
       \put(0,7){$1$}
	   \put(11,0){$A$}
	   \put(12,12){\line(1,1){21}}
	   \put(35,35){\circle*{5}}
	   \put(25,35){$5$}
	   \put(39,25){$\j_I$}
	   \put(37,37){\line(1,1){21}}
	   \put(60,60){\circle{5}}
	   \put(48,63){$10$}
	   \put(64,50){$B_{IJ}$}
	   \put(62,62){\line(1,1){21}}
	   \put(85,85){\circle*{5}}
	   \put(73,88){$10$}
	   \put(78,68){$\c_{IJ}$}
	   \put(87,83){\line(1,-1){21}}
	   \put(110,60){\circle{5}}
	   \put(108,68){$5$}
	   \put(105,48){$V_I$}
	   \put(112,62){\line(1,1){21}}
	   \put(135,85){\circle*{5}}
	   \put(138,86){$1$}
	   \put(133,74){$\w$}
     \end{picture}}
 \label{e:N5A}
\end{equation}
where we have partially collapsed nodes as in Ref.\cite{rA} for convenience, but not all the way, to exhibit the different types of component fields more clearly.  Note that this can be obtained from the unconstrained real $N=5$ superfield by lowering the top fermion $\w$ and then lowering all components of the five-vector $V_I$.

We can now lower a linear combination of the $\c_{IJ}$ and $\w$ by defining for example:
\begin{equation}
  \skew{-5}\dt\l \Defl \cos(\q)\,\w +  \sin(\q)\,\c_{12}
 \label{e:cwq}
\end{equation}
and replace $\w$ with $\l$. That is, we consider the $(1|6|15|9)$-dimensional supermultiplet with the component field content
\begin{equation}
  \cM_\q \Defl (A\,|\,\psi_I, \lambda \,|\,B_{IJ}, V_I\,|\,\c_{IJ})
 \label{e:q}
\end{equation}
wherein the supersymmetry transformation rules are as in\eq{e:N5} except that:
\begin{enumerate}\itemsep=-3pt\vspace{-2mm}
 \item All appearances of $\w$ in\eq{e:N5}---which occur only in $Q_I(V_J)$ for $I=J$---are now replaced by
\begin{equation}
  \w~~\mapsto~~\sec(\q)\,\skew{-5}\dt\l - \tan(\q)\,\c_{12}.
 \label{e:K}
\end{equation}

 \item The row for $Q_I(\w)$ in\eq{e:N5} is now replaced by a row for $Q_I(\l)$, which results in:
\begin{equation}
  \begin{array}{c@{(\l)=\,\,}r@{\,}l}
   \C1{\bm{Q_1}} & i\cos(\q)\,V_1 & -\,i\sin(\q)\,V_2, \\
   \C2{\bm{Q_2}} & i\cos(\q)\,V_2 & +\,i\sin(\q)\,V_1, \\
   \C3{\bm{Q_3}} & i\cos(\q)\,V_3 & +\,i\sin(\q)\,B_{45}, \\
   \C4{\bm{Q_4}} & i\cos(\q)\,V_4 & -\,i\sin(\q)\,B_{35}, \\
   \C5{\bm{Q_5}} & i\cos(\q)\,V_5 & +\,i\sin(\q)\,B_{34}. \\
  \end{array}
 \label{e:Ql}
\end{equation}
\end{enumerate}
Thus, some of the supersymmetry transformation rules now involve linear combinations of fields. We may think of the $\sin\q$-terms in the equations\eq{e:cwq} and\eq{e:Ql} as generating a deformation of the adinkraic supermultiplet\eq{e:N5Q}, being ``tuned'' by the continuous angle $\q$.

To prove that there exists no local change of basis that can remove the occurrence of linear combinations from the results of applying the $Q_I$ on the component fields, we proceed by contradiction, assuming that there exists a basis in which the supermultiplet $\cM_\q$ is adinkraic, and systematically search for such a basis.

Start with $A$, and note that there is no other component field or $\t$-derivative thereof with the engineering dimension $[A]$, simply because $A$ has the lowest engineering dimension within the supermultiplet $\cM_\q$. The only possible field redefinition of $A$ is then a real re-scaling, $A\mapsto c_AA$, with some non-zero $c_A\in\IR$.
 Within an Adinkraic basis for $\cM_\q$, it would have to be that $Q_I(A)$ are all basis elements as well, so that the $\j_I$ are all (up to a multiplicative constant) also basis elements.  Proceeding in this way, the $B_{IJ}$, $\c_{IJ}$ and $V_I$ are likewise all basis elements, each one up to a multiplicative constant.  But now the $Q_I(V_J)$ involve a linear combination of $\skew{-3}\dt\l$ and $\c_{12}$:
\begin{equation}
   Q_I(V_J) = \d_{IJ}\,\big(\,\sec(\q)\,\skew{-3}\dt\l - \tan(\q)\,\c_{12}\,\big),
% \label{e:}
\end{equation}
failing the defining property of adinkraic supermultiplets unless $\q$ is an integral multiple of $\frc\p2$. To solve this, we would have to implement the inverse of the component field redefinition\eq{e:K},
\begin{equation}
  \l = \int\!\rd\t~\big( \cos(\q)\,\w + \sin(\q)\,\c_{12} \big) + \l_0
 \label{e:Kb}
\end{equation}
which is non-local, and so not allowed in general.

We thus conclude that the supermultiplet $\cM_\q$ is truly non-adinkrizable when $\q$ is not an integeral multiple of $\p$.\QED

Furthermore, this argument also shows that these supermultiplets are also inequivalent for different values of $\theta$.  This provides an example of a continuum of distinct supermultiplets.

 In principle, there may well exist usable Lagrangians for the supermultiplet $\cM_\q$, wherein the component field $\l$ only occurs with a $\t$-derivative acting on it, so that the non-locality of the transformation\eq{e:Kb} does not show up in the dynamics of these models. However, recent explicit computations for similar supermultiplets show that this is not the case in most general (still unitary) Lagrangians\cite{rTHGK13a}, and that observable couplings to external (probing) fields exist that would detect such non-locality\cite{rTHGK13}; the variety of employed value(s) of $\q$ is thus observable.
\ping

Reconsider now the example\eqs{e:N5}{e:N5A}. Instead of\eq{e:cwq}, we could lower any other, more general linear combination
\begin{equation}
  \sum_{IJ} c^{IJ} \c_{IJ} + c^0\w.
 \label{e:LinCo}
\end{equation}
Each particular of these continuously many choices would result in a distinct supermultiplet, thus defining an 11-parameter continuous family of $(1|6|15|9)$-dimensional supermultiplets generalizing\eq{e:q}. Modulo the overall scaling, these parameters form the projective space $\IR\IP^{10}$.

 Certain of these resulting supermultiplets may be shown to be equivalent to each other by local component field redefinitions alone. Further equivalence relations may be provided by allowing the supersymmetry charges $Q_I$ to rotate using their $SO(5)$ $R$-symmetry. Tempting as this latter option may be, note that one can easily construct models employing two or more distinct so-constructed supermultiplets, each with a different linear combination of $\c_{IJ}$ and $\w$. Using $R$-symmetries then cannot, in general, reduce the number of effective linear combination coefficients in {\em\/all\/} of the so-constructed supermultiplets to the same smaller number of effective coefficients in each supermultiplet simultaneously. Nor can one hope to be able to transform two distinct so-constructed supermultiplets {\em\/simultaneously\/} into any one particular version, even by using $R$-symmetries together with local component field redefinitions.
 
Ultimately, the determination whether or not $R$-symmetries provide admissible equivalence relations is then a fundamentally dynamical question, depending on the choice of a Lagrangian to govern that dynamics. Similarly, as discussed in Section~\ref{s:GKY}, whether or not even some non-local field redefinitions provide admissible equivalence relations depends on limitations (such as gauge symmetries) that one may impose on the choice of the Lagrangian. 
 
 In principle, the parameters in\eq{e:LinCo} provide a ``rough parameter space,'' wherein one is yet to identify the various points that correspond to equivalent supermultiplets. The transformations relating such points are expected to form a group usually called the ``mapping class group.''
 The above considerations however indicate that a proper definition of a such a mapping class group, which would reduce a rough parameter space of such linear combinations into a proper moduli space, is rather delicate a problem and well beyond the scope of this article. 
 Suffice it to say that finding this ``moduli space'' of supermultiplets would be to consider the projective space of lines in the space spanned by $\c_{IJ}$ and $\w$, and then quotienting by the action of $SO(5)$ should $R$-symmetries be physically/dynamically permissible.

Of course, it is also possible to lower several linearly independent combinations of the form\eq{e:LinCo}, rather than just one. Furthermore and depending on the result at this stage as pertaining to Definition~\ref{D:R/L}.b, one may next entertain the lowering of one or more nontrivial linear combinations
\begin{equation}
  \sum_{IJ} d^{IJ} B_{IJ} + \sum_I d^I V_I.
% \label{e:}
\end{equation}
and so on.

Quite generally then, to classify off-shell worldline supermultiplets, one could start with direct sums of Adinkras, then progressively lower linear combinations of nodes in all possible ways and any number of times, subject to the conditions of Definition~\ref{D:R/L}.b.  This leads to considering projective spaces if we lower a single linear combination, or Grassmannian manifolds if we lower many.  If we continue to lower linear combinations of component fields in this manner, there may emerge complicated interrelations between these lowered subspaces, resulting in certain specific types of flag-varieties.

 While the general classification of these constructions remains difficult, it is a straightforward project to generate all possible lowerings from any one particular direct sum of Adinkras.

\section{Conclusions}
This paper illustrates a method for writing any off-shell engineerable supermultiplet in one dimension as a finite number of lowerings from an Adinkra.  We have illustrated this method with two examples from recent literature, as well as an example specially constructed for this purpose.

Off-shell supermultiplets of $N\,{=}\,1$ and $N\,{=}\,2$ supersymmetry are in fact adinkrizable without the need for lowerings.  This can be proved using the following method: since we know that such supermultiplets can be obtained by lowering an Adinkra, examine Adinkras where a lowering of a linear combination of nodes is possible, then show that there exists a change of basis that results in the linear combination being a single node.

The situation with $N\,{=}\,3$ supermultiplets is somewhat special in a different way:  most of $N\,{=}\,3$ supermultiplets in fact admit a fourth supersymmetry, and sometimes even in several distinct ways\cite{rTHGK13}.  In fact, this is always possible except for the situation where an irreducible $N\,{=}\,3$ supermultiplet takes up four different engineering dimensions, in which case this is the real unconstrained superfield with $N\,{=}\,3$, and such a supermultiplet is not only adinkrizable, but in fact equal to the ``top Adinkra'' in the language of \cite{rA}.

With $N\,{=}\,4$ and higher, however, lowering linear combinations may be necessary.  This leads to an approach to classifying off-shell engineerable supermultiplets: start with a direct sum of minimal Adinkras, then map out all the ways of lowering linear combinations of nodes.  Equivalently, we can say that we are iteratively lowering linear subspaces spanned by these linear combinations. A classification of off-shell supermultiplets then must include a classification of possible choices of subspaces that can be iteratively lowered; we defer the exploration of this avenue to a subsequent effort.

\section{Acknowledgments}
We should like to thank M.~Faux and S.J.~Gates, Jr.\ for extended discussions on issues of supersymmetry, during which many of the concepts used in this paper were developed.
 CD acknowledges the support from the Natural Sciences and Engineering Resource Council of Canada, the Pacific Institute for the Mathematical Sciences, and the McCalla Professorship at the University of Alberta.
 TH thanks the Department of Physics, University of Central Florida, Orlando FL, and the Physics Department of the Faculty of Natural Sciences of the University of Novi Sad, Serbia, for the recurring hospitality and resources.
 GL acknowledges the support by a grant from the Simons Foundation (Award Number 245784).

\providecommand{\href}[2]{#2}
\begingroup\raggedright
\endgroup
\end{document}